\DeclareMathOperator{\Ci}{Ci}
\DeclareMathOperator{\CP}{Caterpillar}
\newcommand{\software}[1]{{\textsl{#1}}}
\newcommand{\FOR}{{\Xi}}
\newcommand{\OR}{{\xi}}
\newcommand{\abs}[1]{{\lvert{#1}\rvert}}
\newtheorem{theorem}{Theorem}
\newtheorem{lemma}[theorem]{Lemma}
\begin{document}


\title{Quantum state-independent contextuality requires 13 rays}

\author{Adán Cabello}
\email{adan@us.es}
\affiliation{%
Departamento de Física Aplicada II,
Universidad de Sevilla,
E-41012 Sevilla,
Spain}

\author{Matthias Kleinmann}
\email{matthias\_kleinmann001@ehu.eus}
\affiliation{%
Department of Theoretical Physics,
University of the Basque Country UPV/EHU,
P.O.\ Box 644,
E-48080 Bilbao,
Spain}

\author{José R.\ Portillo}
\email{josera@us.es}
\affiliation{%
Departamento de Matemática Aplicada I,
Universidad de Sevilla,
E-41012 Sevilla,
Spain}


\begin{abstract}
We show that, regardless of the dimension of the Hilbert space, there exists no 
 set of rays revealing state-independent contextuality with less than 13 rays.
This implies that the set proposed by Yu and Oh in dimension three [Phys.\ 
 Rev.\ Lett.\ \textbf{108}, 030402 (2012)] is actually the minimal set in 
 quantum theory.
This contrasts with the case of Kochen--Specker sets, where the smallest set 
 occurs in dimension four.
\end{abstract}

\maketitle


\section{Introduction}
Fifty years ago, Kochen and Specker \cite{KS67} answered the following 
 question:
Is it possible that, independently of which is the quantum state, the quantum 
 observables each possess a definite single value, regardless of whether they 
 are measured or not?
The Kochen--Specker (KS) theorem states that this is impossible if the 
 dimension of the underlying Hilbert space is larger than two.
One consequence of this theorem is the impossibility of reproducing quantum 
 theory in terms of noncontextual hidden variable theories, defined as those in 
 which the outcomes are independent of the context.
A context is a set of mutually compatible quantum observables.
In this sense, quantum theory is said to exhibit contextuality.

The original proof of the KS theorem had two other distinctive traits:
(i) It only used a finite set of observables with two outcomes, where one 
 outcome is represented by a rank-one projection onto a ray of the Hilbert 
 space.
Hereafter, as it is common in the literature, we will use ray as synonym of 
 self-adjoint rank-one projection.
(ii) The set is KS-uncolorable, i.e., it is impossible to assign values 1 or 0 
 to each ray while respecting that two orthogonal rays cannot both have 
 assigned 1, and 1 must be assigned to exactly one of $d$ mutually orthogonal 
 rays.
These restrictions are motivated by the observation that orthogonal rays 
 correspond to mutually exclusive outcomes of a sharp observable and $d$ 
 mutually orthogonal rays correspond to an exhaustive set of mutually exclusive 
 outcomes for a Hilbert space of dimension $d$.
KS-uncolorable sets of rays are called KS sets \cite{PMMM05}.

The original KS set had 117 rays in $d=3$, which can be grouped in 132 
 contexts.
There have been many efforts for finding simpler sets exhibiting 
 state-independent contextuality (SIC).
For instance, Peres and Mermin realized that, by considering observables not 
 represented by rank-one projections and replacing KS uncolorability by a 
 similar condition, one can find very compact sets of observables in $d=4$ and 
 $d=8$ \cite{Peres90, Mermin90}.
Still, these sets can be rewritten in terms of KS sets \cite{Peres91, KP95}.
So far, it has been shown \cite{PMMM05} that the KS set of minimum cardinality 
 occurs in $d=4$ and has 18 rays \cite{CEG96}.
It also has been proved \cite{PMMM05} that, in $d=3$, the KS set with minimum 
 cardinality has more than 22 and less than 32 rays \cite{CK95}.
On the other hand, the KS set with minimum number of contexts known occurs in 
 $d=6$ and has seven contexts (and 21 rays) \cite{LBPC14}.

A big step was the observation that SIC based on rays does not need to
 rely on KS-uncolorable sets.
It is enough that they lead to a state-independent violation of a 
 noncontextuality inequality.
This substantially simplifies the methods for revealing SIC in $d=3$.
Specifically, Yu and Oh singled out one set with 13 rays in $d=3$ \cite{YO12}.
The optimal state-independent noncontextuality inequalities for this set were 
 identified in Ref.~\onlinecite{KBLGC12}.
Sets of rays having a state-independent violation of a non-contextuality 
 inequality are called SIC sets.

Recent experiments testing SIC \cite{KZGKGCBR09, ARBC09, MRCL10, ZWDCLHYD12, 
 ZUZ13, DHANBSC13, CEGSXLC14, CAEGCXL14, JRO16} and an increasing number of 
 applications, such as device-independent secure communication \cite{HHHHPB10}, 
 local contextuality \cite{Cabello10, LHC16}, Bell inequalities revealing full 
 nonlocality \cite{AGACVMC12}, state-independent quantum dimension witnessing 
 \cite{GBCKL14}, and state-independent hardware certification \cite{CAEGCXL14}, 
 have stimulated the interest in the following question:
Which is the minimal set of rays needed for SIC?
It is known that, for $d=3$, the answer is 13 \cite{CKB15}, but it would be 
 well possible that the minimal set occurs in some higher dimension, as it 
 happens for KS sets.
Here we prove that this is not the case.


\section{Main result}
The basis of our proof is a condition identified by Ramanathan and Horodecki 
 \cite{RH14, CKB15} to be necessary for any SIC set in dimension $d$, namely 
 that the orthogonality graph $G$ of the set of rays has fractional chromatic 
 number $\chi_f(G)>d$.
The orthogonality graph of a SIC set is the graph in which orthogonal rays are 
 represented by adjacent vertices.
A coloring of $G$ is an assignment of colors to the vertices such that adjacent 
 vertices are associated with different colors.
$\chi_f(G)$ is the infimum of $\frac ab$ such that vertices have a set of $b$ 
 associated colors, out of $a$ colors, where adjacent vertices have associated 
 disjoint sets of colors.

Instead of considering all possible SIC sets of size $n$, we rather investigate 
 all graphs with $n$ vertices.
Then, we consider the nondegenerate orthogonal representations (ORs) of any 
 graph $G$.
An OR is an injection $\phi$, mapping the vertices of $G$ to rays, such that 
 adjacent vertices in $G$ are mapped to orthogonal rays.
The OR is faithful (FOR) if, conversely, any two orthogonal rays correspond to 
 an edge of $G$.
We denote by $\FOR(G)$ the smallest dimension of the Hilbert space which still 
 admits a FORs of $G$.
It then follows from the Ramanathan--Horodecki condition that $G$ is the 
 orthogonality graph of a SIC set only if $\chi_f(G)>\FOR(G)$.
Our main results is then as follows.

\begin{theorem}\label{Thm1}
Any graph $G$ with 12 or less vertices has $\chi_f(G)\le \FOR(G)$.
\end{theorem}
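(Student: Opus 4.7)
The plan is a computer-assisted case analysis, pruned aggressively by universal inequalities so that only a small residual list of graphs must be handled. The fundamental bound $\omega(G)\le\FOR(G)$ holds because a clique of size $k$ in $G$ forces $k$ mutually orthogonal rays, hence at least $k$ Hilbert-space dimensions. Combined with the standard chain $\omega(G)\le\chi_f(G)\le\chi(G)$, this immediately disposes of every graph with $\chi(G)=\omega(G)$, in particular every perfect graph, because there $\chi_f(G)=\omega(G)\le\FOR(G)$. More generally, every graph with $\chi_f(G)=\omega(G)$ is settled at once, so the problem reduces to verifying the claim for the finite class of graphs on at most twelve vertices with $\chi_f(G)>\omega(G)$, which can be generated from a \software{nauty} list and filtered to those containing an induced odd hole or odd antihole.

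For each residual graph $G$, I would compute $\chi_f(G)$ exactly as the value of a fractional-coloring linear program and then try to establish the matching lower bound $\FOR(G)\ge\chi_f(G)$ by ruling out faithful orthogonal representations in every integer dimension $d$ with $\omega(G)\le d<\chi_f(G)$. Ruling out a FOR in a fixed dimension $d$ amounts to proving that the semialgebraic system consisting of the edge-orthogonality equations together with the strict non-orthogonality inequalities for the non-edges of $G$ is infeasible in $\mathbb{C}^d$. For graphs this small, infeasibility can be certified by Gr\"obner-basis elimination, by resultants, or, in a more geometric vein, by propagating forced collinearities and coplanarities between rays in the style of Kochen--Specker arguments; injectivity of the OR should be tracked separately since a FOR demands it by definition.

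The main obstacle is precisely this last step: no general theorem supplies a lower bound on $\FOR(G)$ matching $\chi_f(G)$, and each residual graph requires its own algebraic-geometric verification. The crux will be $d=3$, where the Yu--Oh bound is tight; here one must show that every graph on at most twelve vertices with $\omega=3$ and $\chi_f>3$ develops an unavoidable accidental orthogonality when one attempts to realize it faithfully in $\mathbb{C}^3$. For $d\ge 4$ there is far more room, and generic-position arguments together with explicit constructions for small $\chi_f$-to-$\FOR$ ratios should suffice, so the real work concentrates on the three-dimensional case, which is exactly the dimension in which Yu and Oh's thirteen-ray set just barely exists.
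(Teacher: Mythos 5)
Your overall architecture---exhaustive generation of small graphs, pruning by the clique bound $\omega(G)\le\FOR(G)$, and then algebraic lower bounds on $\FOR$ for the survivors---matches the paper's, but the proposal leaves open exactly the step that constitutes the paper's actual content, and you yourself flag it as ``the main obstacle'' without closing it. The clique filter alone is nowhere near enough: among the $166\,122\,463\,890$ graphs on at most 12 vertices, roughly $2\times 10^{10}$ still satisfy $\chi_f(G)>\omega(G)$ (see the paper's Table~\ref{Tab2}), so a per-graph Gr\"obner-basis or resultant infeasibility certificate for each residual graph in each candidate dimension is computationally hopeless. The paper gets past this with two ideas you are missing. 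First, a much stronger subgraph lower bound: $\OR(n\overline{K_2}+mK_1)=2n+m$, hence $\FOR(G)\ge 2n+m$ whenever $n\overline{K_2}+mK_1$ is a subgraph of $G$; this single filter cuts the residual list from about $2\times10^{10}$ to $566\,366$ graphs. Second, instead of certifying each remaining graph separately, the paper exploits the monotonicity $\FOR(S)\le\FOR(G)$ for induced subgraphs $S$ (Lemma~\ref{Lem2}) and exhibits just seven explicit obstruction graphs (five up to joins with $K_1$ and $K_2$, by Lemma~\ref{Lem3}) whose values $\FOR(S)\in\{5,6,7\}$ are established by hand via the kind of coordinate-propagation argument you sketch; every one of the $566\,366$ survivors contains one of these as an induced subgraph, which finishes the proof. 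The hand argument is thus needed only five times, not hundreds of thousands of times.

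A second, more conceptual error: you predict that ``the real work concentrates on the three-dimensional case.'' It does not. The $d=3$ case of the theorem was already settled in Ref.~\onlinecite{CKB15}; the new difficulty is precisely the higher dimensions, and the obstruction graphs the paper must analyze by hand have $\FOR=5$, $6$, and $7$ (e.g., ruling out a faithful representation of $\Ci_{11}(1,2,3)\setminus\set{v}$ in dimension~5). Your suggestion that for $d\ge4$ ``generic-position arguments \dots{} should suffice'' points in the wrong direction: genericity helps to \emph{construct} faithful representations (upper bounds on $\FOR$), whereas what is needed here are \emph{lower} bounds, i.e., proofs of nonexistence, and those are hardest exactly in the intermediate dimensions $4\le d\le 6$ where the residual graphs live.
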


\noindent
Hence, according to quantum theory, no SIC set with less than 13 rays exists.


\section{Proof of Theorem~\ref{Thm1}}
We proceed by an exhaustive search for a counterexample, examining all 
 $166\,122\,463\,890$ nonisomorphic graphs with up to 12 vertices.
Applying a cascade of filters we eventually discard all graphs and prove this 
 way Theorem~\ref{Thm1}.
We start by introducing the criteria for defining these filters and then 
 explain our procedure providing intermediate results for each of the filters.

We denote by $V(G)$ and $E(G)$ the sets of vertices and edges of $G$, 
 respectively.
The complement $\overline G$ of $G$ is a graph that has the same vertices while 
the edges are the complemented set, i.e., $e\in E(\overline G)$ if and only if 
$e\notin E(G)$.
A subgraph $S$ of $G$ is any graph with $V(S)\subset V(G)$ and $E(S)\subset 
 E(G)$.
A subgraph is induced if $\overline S$ is also a subgraph of $\overline G$.
It is a simple observation that any (F)OR is also a (F)OR of any (induced) 
 subgraph.
Defining $\OR$
 analogously to $\FOR$, but for ORs,\footnote{%
The orthogonal rank of a graph is also sometimes denoted by $\OR$ 
 \cite{Cameron:2007EJC}, but there the minimum is taken without the restriction 
 that the OR is an injection.
This yields slightly different properties.}
 this proves the following.

\begin{lemma}\label{Lem2}
By definition, $\OR(G)\le \FOR(G)$.
If $S$ is a subgraph of $G$, then $\OR(S)\le \OR(G)$.
Similarly, if $S$ is an induced subgraph of $G$, then $\FOR(S)\le \FOR(G)$.
\end{lemma}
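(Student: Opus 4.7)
The plan is to verify each of the three inequalities by showing that a (faithful) orthogonal representation of the larger object restricts to one of the smaller object in the same Hilbert space, so that the same ambient dimension works. For the first inequality, the plan is to observe that a FOR is by definition an OR with one extra requirement (the converse implication that orthogonal images force adjacency). Hence any FOR of $G$ realized in some dimension $d$ is in particular an OR of $G$ in that same dimension $d$. Taking the infimum over $d$ on both sides immediately gives $\OR(G)\le\FOR(G)$, which explains the phrase ``by definition'' in the statement.

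For the second inequality, I would fix an OR $\phi$ of $G$ in dimension $\OR(G)$ and define $\psi\coloneqq\phi|_{V(S)}$. Injectivity of $\psi$ is inherited from $\phi$ because $V(S)\subset V(G)$. For any edge $\{u,v\}\in E(S)$, the inclusion $E(S)\subset E(G)$ makes $u$ and $v$ adjacent in $G$ as well, so $\phi(u)\perp\phi(v)$, and therefore $\psi(u)\perp\psi(v)$. Thus $\psi$ is an OR of $S$ in dimension $\OR(G)$, which yields $\OR(S)\le\OR(G)$.

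For the third inequality, I would repeat the restriction construction with a FOR $\phi$ of $G$ in dimension $\FOR(G)$ and set $\psi\coloneqq\phi|_{V(S)}$. The previous paragraph already shows $\psi$ is an OR of $S$, so only faithfulness remains. Suppose $u,v\in V(S)$ are distinct with $\psi(u)\perp\psi(v)$. Then $\phi(u)\perp\phi(v)$, and the faithfulness of $\phi$ in $G$ gives $\{u,v\}\in E(G)$. Here I would invoke the induced-subgraph hypothesis: since $\overline S$ is a subgraph of $\overline G$, any non-edge of $S$ on vertices of $V(S)$ must also be a non-edge of $G$, equivalently any edge of $G$ between two vertices of $V(S)$ must already lie in $E(S)$. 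Consequently $\{u,v\}\in E(S)$, so $\psi$ is a FOR of $S$ in dimension $\FOR(G)$, proving $\FOR(S)\le\FOR(G)$.

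The only real subtlety to watch for is precisely the place where the inducedness is needed: in the third part, if $S$ were an arbitrary (non-induced) subgraph, a pair $u,v\in V(S)$ could be adjacent in $G$ but not in $S$, so the inherited orthogonality $\phi(u)\perp\phi(v)$ would violate faithfulness of $\psi$ as a representation of $S$. This is the only point in the proof that is not purely bookkeeping, and it is dispatched by a single line using the definition of induced subgraph provided in the excerpt.
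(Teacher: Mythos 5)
Your proposal is correct and follows exactly the paper's (very terse) argument: the paper simply observes that any (F)OR of $G$ restricts to a (F)OR of any (induced) subgraph, which is precisely the restriction construction you carry out in detail. Your explicit check of where inducedness is needed for faithfulness is the right — and only — subtlety, and it matches the paper's intent.
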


The union of two graphs $G_1\cup G_2$ consists of the disjoint union of the 
 respective vertex sets and edge sets.
The join $G_1+G_2$ of two graphs is the union of both graphs adding one edge 
 between any pair $(v_1, v_2)\in V(G_1)\times V(G_2)$.
The graph $K_1$ with one vertex and no edge takes a special role in the 
 following simple relations.

\begin{lemma}\label{Lem3}
For two graphs $G_1$ and $G_2$ and $f\in \set{\chi_f, \FOR, \OR}$, we have 
 $f(G_1\cup G_2)= \max[f(G_1), f(G_2)]$ and $f(G_1+ G_2)= f(G_1)+ f(G_2)$, with 
 the exceptions $\FOR(K_1\cup K_1)= 2$ and $\OR(K_1\cup K_1)= 2$.
\end{lemma}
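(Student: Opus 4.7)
The six claims decompose by the two operations and by $f\in\set{\chi_f,\FOR,\OR}$, and Lemma~\ref{Lem2} already furnishes half of each bound: each $G_i$ is an induced subgraph of both $G_1\cup G_2$ and $G_1+G_2$, so $\OR(G_i)$ and $\FOR(G_i)$ are bounded above by the corresponding value on the compound graph; monotonicity of $\chi_f$ under (induced) subgraphs is standard. It therefore suffices to prove the upper bound $f(G_1\cup G_2)\le\max[f(G_1),f(G_2)]$ and to upgrade the join bound to $f(G_1+G_2)=f(G_1)+f(G_2)$.

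For $\chi_f$ both formulas are immediate. For the disjoint union, realize $\chi_f(G_i)=a_i/b_i$ on a common denominator and color the components independently. For the join, concatenate fractional colorings of $G_1$ and $G_2$ on disjoint palettes to obtain $\chi_f(G_1+G_2)\le\chi_f(G_1)+\chi_f(G_2)$, and observe that every cross-edge forces the palettes used on the two parts to be disjoint, giving the matching lower bound.

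For $\OR$ and $\FOR$ on the join I would exhibit the direct-sum representation $\phi=\phi_1\oplus\phi_2$ on $\mathbb{C}^{d_1}\oplus\mathbb{C}^{d_2}$ built from (F)ORs $\phi_i$ of dimension $d_i=f(G_i)$. This map is injective and realizes exactly the required orthogonalities: within each $V(G_i)$ it inherits the (faithfulness of) $\phi_i$, and across the parts every pair is orthogonal by construction, matching the edges added by the join. Thus $f(G_1+G_2)\le d_1+d_2$. Conversely, in any OR of $G_1+G_2$ in $\mathbb{C}^d$ the subspaces $W_i=\operatorname{span}\phi(V(G_i))$ are mutually orthogonal because every cross-pair is an edge, so $d\ge\dim W_1+\dim W_2$. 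The restriction of $\phi$ to $V(G_i)$ is an (F)OR of $G_i$ inside $W_i$, which produces $d\ge\OR(G_1)+\OR(G_2)$ and the corresponding $\FOR$ bound (faithfulness of the restriction holds because the join adds no edge inside $V(G_i)$).

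The delicate case is the disjoint union for $\OR$ and $\FOR$, which is also the source of the exception. Setting $d=\max(d_1,d_2)$ and embedding the two (F)ORs into a common $\mathbb{C}^d$, I would define the candidate (F)OR as $\phi_1$ on $V(G_1)$ together with $U\phi_2$ on $V(G_2)$ for a unitary $U\in U(d)$. Injectivity of the combined map and, in the $\FOR$ case, the absence of any accidental orthogonality between a $\phi_1$-ray and a $U\phi_2$-ray are each proper Zariski-closed conditions on $U$, and they are finite in number, so a generic $U$ evades them whenever $d\ge 2$. The single failure is $d=1$: an OR in dimension one forces each $G_i=K_1$ (since $\mathbb{C}^1$ contains only one ray), so $G_1\cup G_2=K_1\cup K_1$, and this graph admits an FOR in $\mathbb{C}^2$ by any two non-orthogonal rays but none in $\mathbb{C}^1$, explaining $\OR(K_1\cup K_1)=\FOR(K_1\cup K_1)=2$. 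The main obstacle in this plan is exactly the genericity step: one has to argue that simultaneously avoiding all $\abs{V(G_1)}\cdot\abs{V(G_2)}$ forbidden coincidences is achievable for $d\ge 2$, which rests on $U(d)$ being connected and a finite union of proper subvarieties remaining proper.
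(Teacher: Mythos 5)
Your proposal is correct and follows essentially the same route as the paper: lower bounds from subgraph monotonicity, a direct-sum representation with orthogonal spans for the join, a generically rotated embedding for the disjoint union, and the $K_1\cup K_1$ degeneracy as the sole exception. The genericity step you flag as the main obstacle is exactly what the paper dispatches in one sentence (``can then be transformed by a unitary rotation so that the images are disjoint and no rays are orthogonal''), and your Zariski/Baire argument for $d\ge 2$ closes it soundly.
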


\begin{proof}
For $\chi_f$ the relations are well-known, cf., e.g., 
 Ref.~\onlinecite{Scheinerman:1997}, Sec.~3.10.
For $\FOR$ and $\OR$ and the first relation, the maximum is at least a lower 
 bound, since any (F)OR of $G_1\cup G_2$ must also be a (F)OR of $G_1$ and of 
 $G_2$.
Conversely, if at least one of the graphs has more than one vertex then also 
 its (F)OR has at least dimension two.
This (F)OR can then be transformed by a unitary rotation, such that the image 
 of the (F)ORs of $G_1$ and $G_2$ are disjoint and also no rays are orthogonal.
Hence one can combine any two (F)ORs of $G_1$ and $G_2$ to a (F)OR in the 
 larger of the dimensions of both (F)ORs.
The second relation follows at once, noting that $\set{v_1,v_2} \in E(G_1+G_2)$ 
 if and only if either $v_1\in V(G_1)$ and $v_2\in V(G_2)$, or vice versa, or 
 $\set{v_1, v_2}\in E(G_1)$, or $\set{v_1, v_2}\in E(G_2)$.
Hence $\phi$ is a (F)OR for $G_1+G_2$ if and only if it is a (F)OR for $G_1$ 
 and $G_2$, and the spans of $\phi[V(G_1)]$ and $\phi[V(G_2)]$ are mutually 
 orthogonal.
\end{proof}

These relations are useful for our purposes since they imply that, if a graph 
 or its complement is not connected and $\chi_f(G)>\FOR(G)$, then this must 
 already be true for a subgraph of $G$.
Hence in our search we only need to consider connected graphs the complement of 
 whose are also connected.
Another important consequence of Lemma~\ref{Lem3} is that 
 $\OR(n\overline{K_2}+mK_1)= 2n+m$, where $K_\ell$ is the completely connected 
 graph with $\ell$ vertices \cite{Solis:2012, Solis:2015XXX}.
This implies $\FOR(G)\ge 2n+m$ as soon as $n\overline{K_2}+ mK_1$ is a subgraph 
 of $G$.
A weaker form of this condition is that if $K_\ell$ is a subgraph of $G$, then 
 $\FOR(G)\ge \ell$.


\begin{table}
\begin{tabular}{cclrcr}
\hline
Graph name & In Fig.~\ref{Fig1} & \software{graph6} & $\FOR$ & Filter & 
Remaining\\\hline\hline
$\overline{H}$                &(a)& \texttt{Ebtw}      &5&(3.1)& $124\,220$\\
$\Ci_8(1,2)$                  &(b)& \texttt{Gbijmo}    &5&(3.2)& $124\,216$\\
$\overline{H}+K_1$            &---& \texttt{Fbvzw}     &6&(3.3)&   $4\,722$\\
$\overline{\CP_2^{3,2}}$      &(c)& \texttt{Fbtzw}     &6&(3.4)&        569\\
$\overline{\CP_3^{2,1,1}}$    &(d)& \texttt{Fbuzw}     &6&(3.5)&        400\\
$\Ci_{11}(1,2,3)\setminus\set{v}$ &(e)& \texttt{Ibgzmngjg} &6&(3.6)&    366\\
$\overline{H}+K_2$            &---& \texttt{Gzznnk}    &7&(3.7)&          0\\
\hline
\end{tabular}
\caption{\label{Tab1}%
List of graphs used for filtering via Lemma~\ref{Lem2}.
The graphs $\CP_k^{n_1, \dotsc, n_k}$ are linear graphs of length $k$, where 
 $n_v$ leafs are added to vertex $v$.
$H=\CP_2^{2,2}$, $\Ci_n(e_1, \dotsc, e_m)$ is the circulant graph, where each
 vertex is connected to its $e_1$th-, \ldots, $e_m$th-next neighbor.
$G\setminus\set{v}$ is the graph $G$ with one vertex removed.
Selected graphs are displayed in Fig.~\ref{Fig1}.
\software{graph6} is a standard graph data format widely used in
computer software \cite{graph6}.
The number $\FOR$ is the smallest dimension of any faithful nondegenerate
 orthogonal representation.
The last column shows the number of graphs remaining after filtering for the
 induced subgraph, cf.\ main text.
}
\end{table}


\begin{figure}
\includegraphics[width=.6\linewidth]{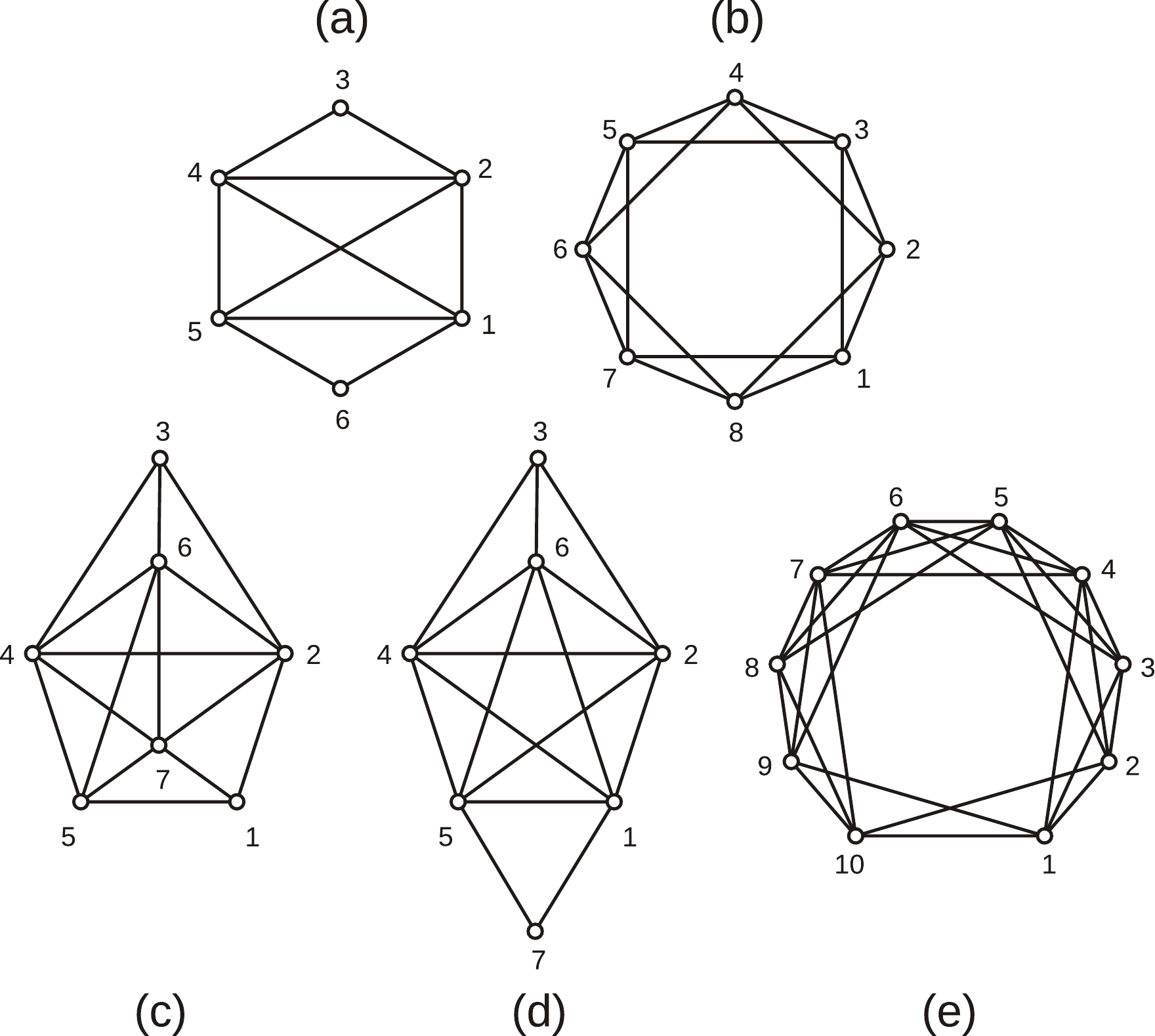}
\caption{\label{Fig1}%
Graphs from Table~\ref{Tab1}.
Graphs (a) and (b) have $\FOR=5$ and graphs (c)--(e) have $\FOR=6$.
The other two graphs from Table~\ref{Tab1} are obtained by adding one or two 
vertices to graph (a) each being connected to all other vertices.
}
\end{figure}


As a final ingredient to our proof, we use the seven graphs listed in
 Table~\ref{Tab1}.
If any of those graphs is an induced subgraph $S$ of $G$, then $\FOR(G)\ge 
 \FOR(S)$ applies.
The values of $\FOR(S)$ are obtained by construction, and due to 
 Lemma~\ref{Lem3} it is sufficient to study the five graphs in Fig.~\ref{Fig1}.
The construction is similar for all five graphs and we demonstrate the method 
 only for the most complicated case $\Ci_{11}(1,2,3)\setminus\set{v}$, cf.\ 
 Fig.~\ref{Fig1}~(e).
The vertices $\set{4,5,6,7}$ form the induced subgraph $K_4$ and, without loss 
 of generality, we can choose $\phi(4)= (1,0,0,0,0)$, $\phi(5)= (0,1,0,0,0)$, 
 $\phi(6)= (0,0,1,0,0)$, and $\phi(7)= (0,0,0,1,0)$.
Since vertex 3 is adjacent to the vertices $\set{4, 5, 6}$ and not adjacent to 
 vertex 7 or 8, and vertex 7 is adjacent to 8, we have $\phi(3)= (0,0,0,a,1)$ 
 with some $a\ne 0$.
By similar arguments, $\phi(2)= (0,0,b,-1/a^*,1)$ with $b\ne0$, and, by 
 symmetry, $\phi(8)= (c,0,0,0,1)$ and $\phi(9)= (-1/c^*,d,0,0,1)$, with $c, 
 d\ne 0$.
Using, that vertex 1 is adjacent to the vertices $\set{4,9,3,2}$, we have 
 $\phi(1)= (0,-1/d^*,-x/b^*,-1/a^*,1)$ with $x= 1+1/\abs{a}^2$, and, by 
 symmetry, $\phi(10)= (-1/c^*,-y/d^*,-1/b^*,0,1)$ with $y=1+1/\abs{c}^2$.
Eventually, vertex 1 and 10 are adjacent, implying $y/\abs{d}^2+x/\abs{b}^2+1= 
 0$, which is a contradiction.
However, it is straightforward to find a FOR in dimension 6, proving 
 $\FOR[\Ci_{11}(1,2,3)\setminus\set{v}]= 6$.

For all graphs with less than 13 vertices, we discard those graphs which 
 satisfy at least one of the following filter criteria:

\noindent
\begin{tabularx}{\linewidth}{@{}lX@{}}
(1)&$G$ or $\overline G$ is not connected.\\
(2.1)&$G$ has subgraph $K_\ell$, where $\chi_f(G)\le \ell$.\\
(2.2)&$G$ has subgraph $n\overline{K_2}+mK_1$, where $\chi_f(G)\le 2n+m<
 \chi_f(G)+1$ and $m\in\set{0,1}$.\\
(3.1)--(3.7)&$G$ has an induced subgraph $S$ from Table~\ref{Tab1} with 
 $\chi_f(G)\le \FOR(S)$.
\end{tabularx}

For obvious reasons, we fall back to a computer-based proof.
We use \software{geng} from the software package \software{nauty} 
 \cite{nauty14, nauty} to generate all nonisomorphic graphs.
The fractional chromatic number can be obtained by solving the linear program 
 \cite{Scheinerman:1997, SDT0},
\begin{equation}\begin{split}
\text{maximize:}& \sum_{v\in V(G)} x_v \\
\text{subject to:}& \sum_{v\in \mathcal I} x_v \le 1,
\text{ for all $\mathcal I$ of $G$}\\
& x_v\ge 0 \text{ for all $v\in V(G)$},
\end{split}\end{equation}
 where $\mathcal I$ are independent sets of $G$, i.e., sets of vertices where 
 all vertices are mutually nonadjacent.
We find optimal solutions to this program using the software package 
 \software{GLPK} \cite{GLPK} and verify the correctness of the solution by 
 applying the strong duality of linear programs, using an accuracy threshold of 
 $\epsilon=10^{-12}$.
We approximate the floating-point value obtained for $\chi_f$ by a rational 
 number with less than $\epsilon$ deviation, while constraining the denominator 
 to be not larger than $n m$, where $n$ is the number vertices of $G$ and $m$ 
 is the number of maximal independent sets.
This procedure always succeeds and ensures that the calculation of $\chi_f$ is 
 exact, despite floating-point arithmetic being used in intermediate steps.


\begin{table}
\begin{tabular}{crrrrrr}
\hline
order&graphs&(1)&(2.1)&(2.2)\\\hline\hline
1&1&1&0&0\\
2&2&0&0&0\\
3&4&0&0&0\\
4&11&1&0&0\\
5&34&8&1&0\\
6&156&68&2&0\\
7&$1\,044$&662&28&0\\
8&$12\,346$&$9\,888$&456&0\\
9&$274\,668$&$247\,492$&$15\,954$&3\\
10&$12\,005\,168$&$11\,427\,974$&$957\,882$&98\\
11&$1\,018\,997\,864$&$994\,403\,266$&$99\,869\,691$&$5\,765$\\
12&$165\,091\,172\,592$&$163\,028\,488\,360$&$19\,715\,979\,447$&$560\,500$\\
\hline
\end{tabular}
\caption{\label{Tab2}%
Number of nonisomorphic graphs with 1--12 vertices.
(1)--(2.2) after filtering, cf.\ main text.
}
\end{table}


We apply all filters (1)--(3.7) consecutively so that each filter reduces the 
 number of candidate graphs.
The numbers of graphs remaining after each step are shown in Table~\ref{Tab2}, 
 for filters (1), (2.1), and (2.2), and as a function of the number of vertices 
 of the graph.
The list of $566\,366$ graphs remaining after filter (2.2) is available in 
 \software{graph6}-format \cite{data}.
For the filters (3.1)--(3.7), we show in Table~\ref{Tab1} the total number of 
 remaining graphs after each filter.
No graph remains after applying all filters, which proves Theorem~\ref{Thm1}.


\section{Conclusions}
Contextuality is a fundamental feature of quantum observables and can be 
 completely detached from any features of the quantum state of the system.
This state-independent contextuality already occurs for the most elementary 
 case of observables being sharp and having only two outcomes, one of which is 
 nondegenerate; such observables can be represented by rays in a Hilbert space.
Here we have shown that state-independent contextuality with elementary 
 observables requires at least 13 different observables by performing an 
 exhaustive search over all cases with less observables.
The Yu--Oh set is an example of such 13 observables and is already realizable 
 on a three-level quantum system, which is the smallest quantum system allowing 
 for contextuality.
This is in contrast to the first instances of state-independent contextuality, 
 the Kochen--Specker sets, where the smallest set cannot be realized on a 
 three-level system.
Therefore, fifty years after the discovery of state-independent contextuality 
 in quantum theory, we finally have the answer to the question of which is the 
 simplest way to reveal it, i.e., which is the smallest set of elementary 
 observables exhibiting state-independent contextuality.


\begin{acknowledgments}
We thank the team of the Scientific Computing Center of Andalusia (CICA) for 
 their help with the distributed computing.
This work was supported by
 Project No.\ FIS2014-60843-P,
 ``Advanced Quantum Information'' (MINECO, Spain), with FEDER funds,
 the project ``Photonic Quantum Information'' (Knut and Alice Wallenberg 
 Foundation, Sweden),
 the EU (ERC Starting Grant GEDENTQOPT),
 and by the DFG (Forschungsstipendium KL 2726/2--1).
\end{acknowledgments}


\end{document}